\begin{document}



\title{Effective Keyword Search in Graphs}

\author{
\alignauthor
Mehdi Kargar{\small $^{*}$}, Lukasz Golab{\small $^{\#}$}, Jaroslaw Szlichta{\small $^{\$}$}\\
       \affaddr{$^{*}$York University, Toronto, Canada}\\
       \affaddr{$^{\#}$University of Waterloo, Waterloo, Canada}\\
       \affaddr{$^{\$}$University of Ontario Institute of Technology, Oshawa, Canada}\\              
       \affaddr{kargar@cse.yorku.ca, lgolab@uwaterloo.ca, jaroslaw.szlichta@uoit.ca}
}

\maketitle

\newdef{definition}{Definition}
\newdef{problem}{Problem}
\newtheorem{theorem}{Theorem}

\begin{abstract}

In a node-labeled graph, keyword search finds subtrees of the graph whose nodes contain all of the query keywords.  This provides a way to query graph databases that neither requires mastery of a query language such as SPARQL, nor a deep knowledge of the database schema.  Previous work ranks answer trees using combinations of structural and content-based metrics, such as path lengths between keywords or relevance of the labels in the answer tree to the query keywords.   We propose two new ways to rank keyword search results over graphs.  The first takes node importance into account while the second is a bi-objective optimization of edge weights and node importance.  Since both of these problems are NP-hard, we propose greedy algorithms to solve them, and experimentally verify their effectiveness and efficiency on a real dataset.

\end{abstract}



\section{Introduction}

Many interesting datasets, e.g., Web and social data, describe connections and relationships, and therefore can be naturally represented as graphs.  Graph databases have recently been proposed to natively manage such data, with SPARQL being a popular query language.  However, users who are not familiar with the query language or the database schema risk being locked out of valuable data.  This is where keyword search comes in: as an intuitive way to query and explore graphs whose nodes are associated with text, without having deep knowledge of programming languages or the database schema.  While the keyword search approach certainly has merit, a technical challenge is to extract \emph{semantically meaningful} subsets of the data that contain the specified keywords.  Given the limitations of existing tools,  we present a novel framework that allows non-technical users to effectively search for answers in graph databases.

Given a set of query keywords, the problem is to find a subtree of the graph that contains all the keywords.  An answer tree consists of a \emph{root node}, \emph{content nodes} at the leaves, which contain the query keywords, and \emph{middle nodes} that connect the content nodes together through the root node.  By convention, the root node is also considered to be a middle node.

Previous work ranks answer trees using combinations of structural and content-based metrics, such as the sum of edge weights (shorter weighted paths from the root to the content nodes are better), and IR metrics corresponding to the relevance of the text in the answer tree to the query keywords \cite{He2007}.  However, different nodes in the graph may have different semantic importance to the answer.  Following on and inspired by recent work in keyword search in relational databases \cite{Kargar2015a}, we have implemented a system for effective keyword search over graph databases, that takes node importance, including the middle nodes, into account.

\begin{figure}[t]
\centering
\includegraphics[width=8.5cm]{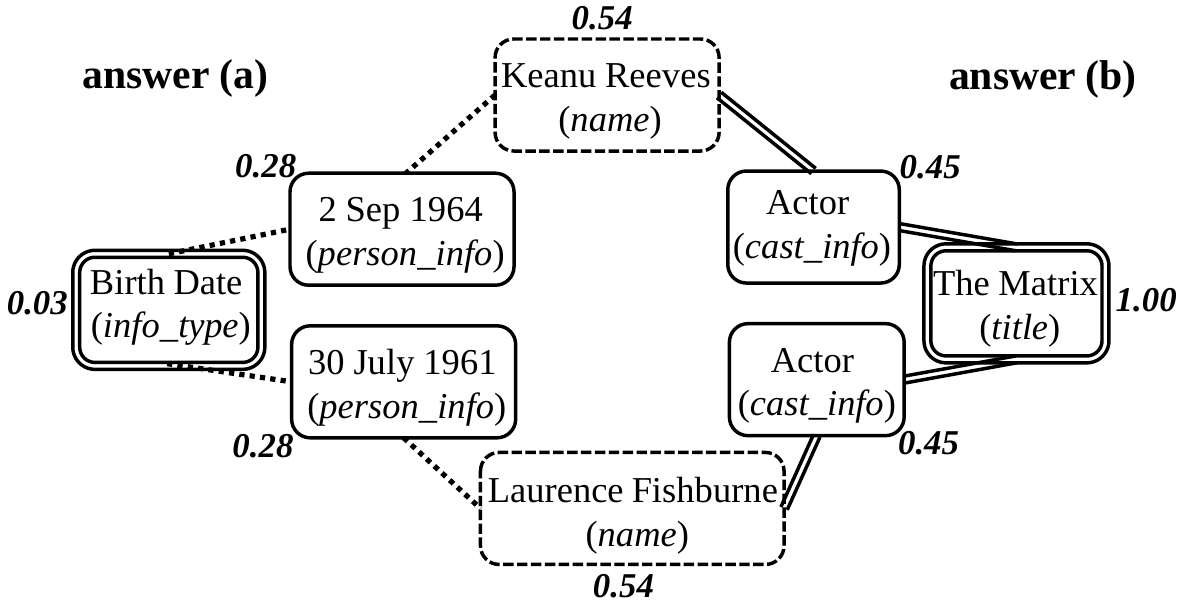}
\caption{Two answers of the example keyword query.} \label{fig:graph}
\end{figure}

To motivate our approach and illustrate the shortcomings of existing techniques, consider a graph obtained from the Internet Movie Database (IMDb) by converting \emph{foreign keys} in the database schema to undirected equally-weighted edges as shown in Figure \ref{fig:graph}.  For each node, the name of its table in the database is also shown (e.g., The Matrix movie
appeared in the \textit{title} table).  In total, IMDb has 21 tables associated with 21 different node types in the graph.  Suppose we issue the keyword query [\textit{"Keanu Reeves" "Laurence Fishburne"}] to find the relationship(s) between these two actors.  Two answers, labeled (a) (rooted at Birth Date and using dotted edges) and (b) (rooted at The Matrix and using double edges), are shown in Figure \ref{fig:graph}.  Answer (a) suggests that Reeves and Fishburne are connected because they both have an \textit{info\_type} called Birth Date in the \textit{person\_info} table, though their actual birth dates are different.  Answer (b) reveals that these two actors both starred in the movie The Matrix.  Clearly, answer (b) is more meaningful.  However, previous work cannot distinguish between these two answers because their content nodes have the same keywords (i.e., the same IR score) and they both have the same number of edges.  The difference is in the middle nodes: as shown in Figure~\ref{fig:graph}, The Matrix (\textit{title}) and Actor (\textit{cast\_info}) nodes have higher weights than \textit{info\_type} and \textit{person\_info} nodes because acting in the same movie is a strong relationship\footnote{We computed the node importance values in Figure~\ref{fig:graph} from the importance of the corresponding tables in the IMDb database, which in turn we computed using the algorithm from \cite{Kargar2015a}.  In general, node importance is application-specific; e.g., social credibility of people in a social graph or PageRank scores in the WWW graph.}.

Our contributions are as follows.

\begin{enumerate}[topsep=0pt,itemsep=-1ex,partopsep=1ex,parsep=1ex]

\item 
We design a system for effective keyword search in graphs.  The system provides two new ways to rank the results: the first one takes node importance into account whereas the second one is a bi-objective optimization of edge weights and node importance (Section 2). We prove that both of these problems are NP-hard (Section 2). Therefore, we propose two greedy solutions to rank answers (Section 3).

\item
We ensure the efficiency of our ranking algorithms by implementing a 2-hop cover index that returns the shortest path distance between any two nodes.  Since answering distance queries in graphs is known to be computationally expensive \cite{Akiba2013}, we propose several optimizations that significantly improve performance and scalability without compromising the precision of answers. These include reducing the search space by indexing only the nodes that are close to each other; this is reasonable since answer trees with nodes far from each other are unlikely to be chosen.  Furthermore, our second greedy algorithm does not need to build separate indexes for different values of the parameter controlling the importance of edge weights and node importance (Section 3).

\item The system provides new relevance measures for the importance of nodes and edges based on our prior work on keyword search over databases \cite{Kargar2015a}.

\item 
We perform a comprehensive evaluation based on IMDb dataset to demonstrate the viability of our methods and to compare against existing methods.
  
\end{enumerate}

\section{Problem Statement}

Given a node-labeled graph with application-supplied node and edge weights, as well as a query consisting of a set of keywords, we want to find and rank a set of \textbf{\textit{trees}} that contain all the query keywords. 
Assuming that edge weights denote semantic distance between nodes, effective answer trees should have a small sum of edge weights and a large sum of node (importance) weights.

\begin{definition}
\textbf{Edge Weight Objective (EW):} 
Suppose that the edges of an answer tree $T$ are denoted as $\{e_1, e_2, \dots$ $, e_m\}$. The edge weight of the answer is defined as $EW(T) = \sum_{i=1}^{m} w(e_i)$, where $w(e_i)$ is the weight of edge $e_i$. 
\label{definition-edge-weight}
\end{definition}

Minimizing the edge weight objective is NP-hard \cite{He2007}. Now, assume that $imp(n_i)$ is the importance of node $n_i$ and let $imp'(n_i) = \frac{1}{imp(n_i)}$.

\begin{definition} 
\textbf{Node Importance Objective (NI):} 
Suppose that the nodes (content and middle) of an answer tree $T$ are denoted as $\{n_1, n_2, \dots , n_k\}$. The node importance of the answer tree is defined as $NI(T) = \sum_{i=1}^{k} imp'(n_i)$.
\label{definition-node-imp}
\end{definition}

\begin{problem}
Given a graph $G$ and a set of query keywords, find a minimal tree $T$ in $G$ that covers the keywords and minimizes the node importance objective $NI(T)$.
\label{problem-nodeimp}
\end{problem}

\begin{theorem}
Problem \ref{problem-nodeimp} is NP-hard.
\end{theorem}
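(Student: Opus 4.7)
The plan is to reduce from the NP-hard Set Cover problem. An instance of Set Cover consists of a universe $U = \{u_1, \ldots, u_n\}$ and subsets $S_1, \ldots, S_m \subseteq U$, and the task is to decide whether $U$ can be covered by at most $p$ of the subsets.

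Given such an instance, I would construct a graph $H$ on the node set $\{r\} \cup \{s_j\}_{j=1}^m \cup \{e_i\}_{i=1}^n$, with edges $\{r, s_j\}$ for every $j$ and $\{s_j, e_i\}$ whenever $u_i \in S_j$. I assign importance values so that $imp'(s_j) = M$ for a large constant $M := n+2$, while $imp'(r) = imp'(e_i) = 1$ (equivalently, $imp(s_j) = 1/M$ and $imp(r) = imp(e_i) = 1$). I introduce $n$ query keywords, placing $k_i$ on $e_i$ and querying with $\{k_1, \ldots, k_n\}$. The entire construction is polynomial in the input.

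For correctness, any answer tree $T$ must contain every $e_i$, since $e_i$ is the only carrier of $k_i$. Each $e_i$ is adjacent only to the nodes $s_j$ with $u_i \in S_j$, so connectivity forces $T$ to include a subset $\mathcal{C} \subseteq \{s_1,\ldots,s_m\}$ whose corresponding $S_j$'s cover $U$, together with the root $r$; conversely, any set cover $\mathcal{C}$ induces such a tree in the natural way. Because $NI(T) = 1 + n + |\mathcal{C}|\cdot M$, minimizing $NI$ over valid answer trees is exactly equivalent to minimizing $|\mathcal{C}|$ over set covers of $U$, so a cover of size at most $p$ exists if and only if the optimum of Problem~\ref{problem-nodeimp} is at most $1 + n + pM$.

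The main subtlety will be calibrating $M$ so that the cost of adding one extra set node always dominates any conceivable savings elsewhere in the tree; taking $M = n+2$ suffices, since the total importance of all non-set nodes is at most $n+1 < M$, ruling out any trade-off that would let a larger cover beat a smaller one. The minimality requirement on $T$ poses no additional difficulty, because removing any superfluous $s_j$ or $e_i$ strictly decreases $NI$, so an optimal answer tree is automatically minimal.
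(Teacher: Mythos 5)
Your reduction is correct, but it takes a genuinely different route from the paper. The paper reduces from the group Steiner tree problem with a one-line observation: set every node importance and every edge weight to $1.0$, note that any tree on $k$ nodes has $k-1$ edges, so $NI(T) = EW(T) + 1$, and the known NP-hardness of minimizing $EW$ transfers immediately. You instead give a self-contained reduction from Set Cover with non-uniform importance values, which does not lean on the prior hardness result for the edge-weight objective and shows that the hardness persists even on a very shallow, structured graph where the difficulty lies entirely in choosing the middle (set) nodes. One small imprecision: connectivity does \emph{not} force the tree to contain $r$, nor to contain only the root as glue --- a single covering set yields a star without $r$, and two sets sharing an element can be joined through that element node --- so your formula $NI(T) = 1 + n + |\mathcal{C}|\cdot M$ should really be $NI(T) = n + |\mathcal{C}|\cdot M + [r \in T]$. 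This does not break the argument: the optimum is sandwiched between $n + p^{*}M$ and $n + p^{*}M + 1$ where $p^{*}$ is the minimum cover size, and since $M = n+2 \ge 2$ the threshold test against $1 + n + pM$ still decides whether $p^{*} \le p$. The paper's proof is shorter and buys simplicity by citing existing hardness; yours is more work but is independent of that citation and, because it keeps the set-node term dominant, more readily extends to inapproximability-style statements.
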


We prove that the decision version of the problem is NP-hard. Thus, as a direct result, minimizing node importance objective is NP-hard too. The decision problem is specified as follows.

\begin{problem}
Given a graph $G$ and a set of input keywords, determine whether there exists a minimal tree with node importance value of $ni$, for some constant $ni$.
\label{problem-nodeimp-decision}
\end{problem}

\begin{theorem}
Problem \ref{problem-nodeimp-decision} is NP-hard.
\end{theorem}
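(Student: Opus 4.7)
The plan is to reduce from the edge-weight minimization version of keyword search, which is already stated to be NP-hard \cite{He2007}, via an edge-subdivision gadget that converts edge weights into node importances. The decision version of that problem asks, given $(G, Q, W, k)$, whether $G$ admits a minimal answer tree covering $Q$ with $EW(T) \le k$. I will show a polynomial-time mapping from this instance to an instance $(G', Q, I', k)$ of Problem \ref{problem-nodeimp-decision} such that a ``yes'' instance on one side corresponds to a ``yes'' instance on the other.

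The construction is as follows. For each edge $e=(u,v)$ of weight $w(e)$ in $G$, introduce a fresh subdivision vertex $x_e$, delete $e$, and add two unit edges $(u,x_e)$ and $(x_e,v)$. The keyword labels of the original vertices are inherited unchanged, and the subdivision vertices carry no keyword, so the set of content nodes for $Q$ in $G'$ is exactly the set of content nodes in $G$. For importances, I set $imp'(x_e) = w(e)$ for every subdivision vertex and $imp'(v)=0$ for every original vertex (equivalently $imp(v)=\infty$; if the model insists on finite importances, use a sufficiently small $\epsilon>0$ and scale $k$ by $\epsilon|V|$, an argument that goes through because the number of original vertices in any minimal answer tree is bounded by $|V|$).

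Next I would establish the bijection between minimal answer trees in $G$ and $G'$. Given a minimal tree $T$ in $G$ covering $Q$, replacing each edge $e \in T$ by the two-edge path through $x_e$ yields a minimal tree $T'$ in $G'$ covering $Q$ with the same content nodes; conversely, any minimal tree $T'$ in $G'$ covering $Q$ must, by minimality, enter every subdivision vertex $x_e$ through both of its incident edges (otherwise $x_e$ is a leaf with no keyword and can be pruned), so contracting the subdivision vertices gives a minimal tree $T$ in $G$. Under this bijection, $NI(T') = \sum_{e \in T} imp'(x_e) = \sum_{e \in T} w(e) = EW(T)$, so $T$ witnesses $EW(T)\le k$ iff $T'$ witnesses $NI(T')\le k$. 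This settles the reduction, and since it is clearly polynomial, the NP-hardness of the edge-weight decision problem transfers to Problem \ref{problem-nodeimp-decision}.

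The main obstacle is the treatment of the original vertices' importance contributions: if $imp'$ must be strictly positive, the naive reduction inflates $NI(T')$ by a term that depends on the tree's vertex count, not on the quantity being minimized, so the threshold $k$ no longer translates cleanly. I would handle this by either (i) appealing to the definition $imp'(v)=1/imp(v)$ and letting $imp(v)\to\infty$ so $imp'(v)=0$, or (ii) setting $imp'(v)=\epsilon$ with $\epsilon$ strictly smaller than any ratio of distinct achievable edge-weight sums (e.g., $\epsilon < 1/(|V|\cdot W_{\max})$), and enforcing minimality to bound the number of original vertices in any candidate tree, so that the additive $\epsilon$-perturbation cannot change which side of the threshold $k$ the answer lies on.
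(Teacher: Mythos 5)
Your reduction is correct in substance, but it takes a genuinely different route from the paper. The paper also reduces from the group Steiner tree problem (equivalently, the edge-weight keyword search problem you start from), but it sidesteps any gadgetry: it restricts attention to instances where every edge has weight $1.0$ and assigns every node importance $1.0$, then uses the identity that a tree on $k$ nodes has $k-1$ edges, so $NI(T) = EW(T) + 1$ for every answer tree and the threshold simply shifts by one. That one-line argument works because group Steiner tree is already NP-hard on unit-weight instances, so nothing is lost by the restriction. Your subdivision gadget is heavier machinery that buys generality the paper does not need: by planting each edge weight $w(e)$ on a fresh keyword-free vertex $x_e$ you get $NI(T') = EW(T)$ exactly for arbitrary weighted instances, at the price of the $\epsilon$-perturbation needed to keep the original vertices' importances finite and positive. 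That bookkeeping is the only fragile point: your stated bound $\epsilon < 1/(|V|\cdot W_{\max})$ controls the additive error only when the achievable edge-weight sums have a gap of at least $1$ (e.g., integer weights); for general rational weights you must instead pick $\epsilon$ below the minimum gap between distinct achievable sums, which is still polynomially representable but is not the quantity you wrote. Since you may as well reduce from the unit-weight case (as the paper does), this is a repairable quibble rather than a gap, but it illustrates why the paper's choice of an all-unit instance makes the proof essentially free.
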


\begin{proof}
The problem is obviously in NP. We prove the theorem by a reduction from group Steiner tree problem. First, consider a graph in which all edges have the same weight of 1.0 and all nodes have the same importance of 1.0. A feasible solution to the above problem with the node importance at most $ni$ is a solution for the group Steiner tree problem with the weight at most ($ni$ - 1). This is the case since for any tree, the number of edges is equal to the number of nodes minus one. Thus, if there exists a tree with the node importance at most $ni$, then there exists a tree with the sum of the edge weights at most ($ni$ - 1). On the other hand, a tree with edge weights at most ($ni$ - 1) determines a feasible tree with the node importance at most $ni$. Therefore, the proof is complete.
\end{proof}

Now, we are also interested in the bi-criteria optimization problem of maximizing node importance and minimizing the sum of edge weights. One way to solve a bi-criteria optimization problem is to convert it into a single objective problem by combining the two objective functions into one. We define a single objective that combines the edge weights and node importance with a tradeoff parameter $\lambda$ as follows.

\begin{definition} 
\textbf{Combined Objective (C):}
Given an answer tree $T$ from graph $G$ for a given set of input keywords and a tradeoff $\lambda$ between the edge weights and the node importance, the combined objective of $T$ is defined as $C(T) = \lambda.NI(T) + (1-\lambda).EW(T)$. \label{definition-combined-objective}
\end{definition}

The parameter $\lambda$ varies from 0 to 1 and determines the tradeoff between edge weights and the node importance.  Since $\lambda$ is application-dependent, we leverage user and domain expert feedback to set and update it over time.  Incorporating user feedback is a vital component towards achieving high search precision.  Since $EW(T)$ and $NI(T)$ may have different scales, they should first be normalized.  Given the combined cost objective, we define the following problem:

\begin{problem}
Given a graph $G$, a set of input keywords and a tradeoff $\lambda$ between the edge weight and node importance objectives, find a minimal tree $T$ in $G$ that covers the input keywords and minimizes the combined objective $C(T)$.
\label{problem-combined}
\end{problem}

\begin{theorem}
Problem \ref{problem-combined} is NP-hard.
\end{theorem}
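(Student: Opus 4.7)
The plan is to mimic the reduction used for Theorem 2 and reduce from the group Steiner tree problem to the decision version of Problem 3. First, I would formulate the decision version: given $G$, a keyword set, a tradeoff $\lambda$, and a constant $c$, decide whether there is a minimal tree $T$ covering the keywords with $C(T) \le c$. Membership in NP is immediate since a candidate tree can be checked in polynomial time.

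For the reduction, I would take an arbitrary instance of the group Steiner tree problem and extend it to an instance of Problem 3 by assigning every node an importance of $1.0$ (so $imp'(n_i)=1$) and keeping the original edge weights; we are free to choose any $\lambda \in [0,1]$, so I would fix, say, $\lambda = 1/2$. The key observation, already exploited in the proof of Theorem 2, is that any tree with $k$ nodes has exactly $k-1$ edges, so in this instance $NI(T) = k$ and $C(T) = \lambda\,k + (1-\lambda)\,EW(T)$. Since $k = |\text{nodes}(T)|$ and the content nodes are fixed by the keyword set, the dependence of $k$ on the structure of $T$ is exactly one more than the count of ``middle'' nodes; in particular with unit-weight edges one recovers the identity $NI(T) = EW(T) + 1$ so that $C(T)$ becomes an affine function of $EW(T)$ alone, and minimizing $C(T)$ is equivalent to solving group Steiner.

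To handle general edge weights (so that the source problem is the full group Steiner tree and not just its unit-weight special case), I would instead use the simpler device of scaling: set $\lambda$ close enough to $0$ (or simply $\lambda = 0$ if the problem statement permits the endpoint) so that the $EW$ term dominates, then $C(T)$ is just $EW(T)$ up to a constant additive term bounded by $|V|$, and one can choose $c$ so that $C(T) \le c$ if and only if the original group Steiner instance has a solution of weight at most some target. Symmetrically, taking $\lambda$ at the other endpoint reduces Problem 1 to Problem 3, giving a second proof route via Theorem 2.

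The main obstacle I anticipate is cleanly handling the bi-criteria nature of $C$: because $C(T)$ blends two quantities that can trade off against each other, I need to ensure that in the reduction one term is either constant or perfectly aligned with the other on the restricted class of instances, so that an optimum for $C$ is forced to coincide with a group Steiner optimum. The unit-node-weight construction accomplishes exactly this, and together with the fact that $\lambda$ is part of the input (so the reduction may fix it), the rest of the argument is essentially a restatement of the Theorem 2 proof.
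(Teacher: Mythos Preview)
Your proposal is correct, and in fact your secondary route---fixing $\lambda$ at an endpoint so that $C(T)$ collapses to $EW(T)$ or to $NI(T)$---is essentially the paper's entire argument. The paper simply notes that minimizing $EW(T)$ and minimizing $NI(T)$ are each already NP-hard (the former from \cite{He2007}, the latter from Theorem~2), and that each is ``linearly related'' to $C(T)$; since $\lambda$ is part of the input to Problem~\ref{problem-combined}, the reduction is just to set $\lambda=0$ (or $\lambda=1$) and inherit hardness from the corresponding single-objective problem.

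Your primary route---building a fresh group Steiner reduction with unit node importances so that $NI(T)=EW(T)+1$ and hence $C(T)$ is affine in $EW(T)$ for \emph{any} fixed $\lambda$---is more work than the paper does, but it buys something the paper's argument does not: it shows hardness even for interior values of $\lambda$, not just at the endpoints. The paper's proof, read literally, only establishes hardness when $\lambda$ is allowed to be $0$ or $1$; your construction shows the problem remains hard for, say, $\lambda=1/2$. So your approach is a genuinely stronger statement, at the cost of repeating the Steiner-tree reduction rather than invoking the earlier theorems as black boxes.
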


\begin{proof}
We show that finding a tree covering the input keywords with minimized edge weight objective ($EW(T)$) or minimized node importance objective ($NI(T)$) is NP-hard. Since both $EW(T)$ and $NI(T)$ are linearly related to $C(T)$ (the objective of Problem \ref{problem-combined}), then minimizing $C(T)$ is also an NP-hard problem.
\end{proof}

\section{System Description}

\begin{figure*}[t]
\centering
\includegraphics[width=10cm]{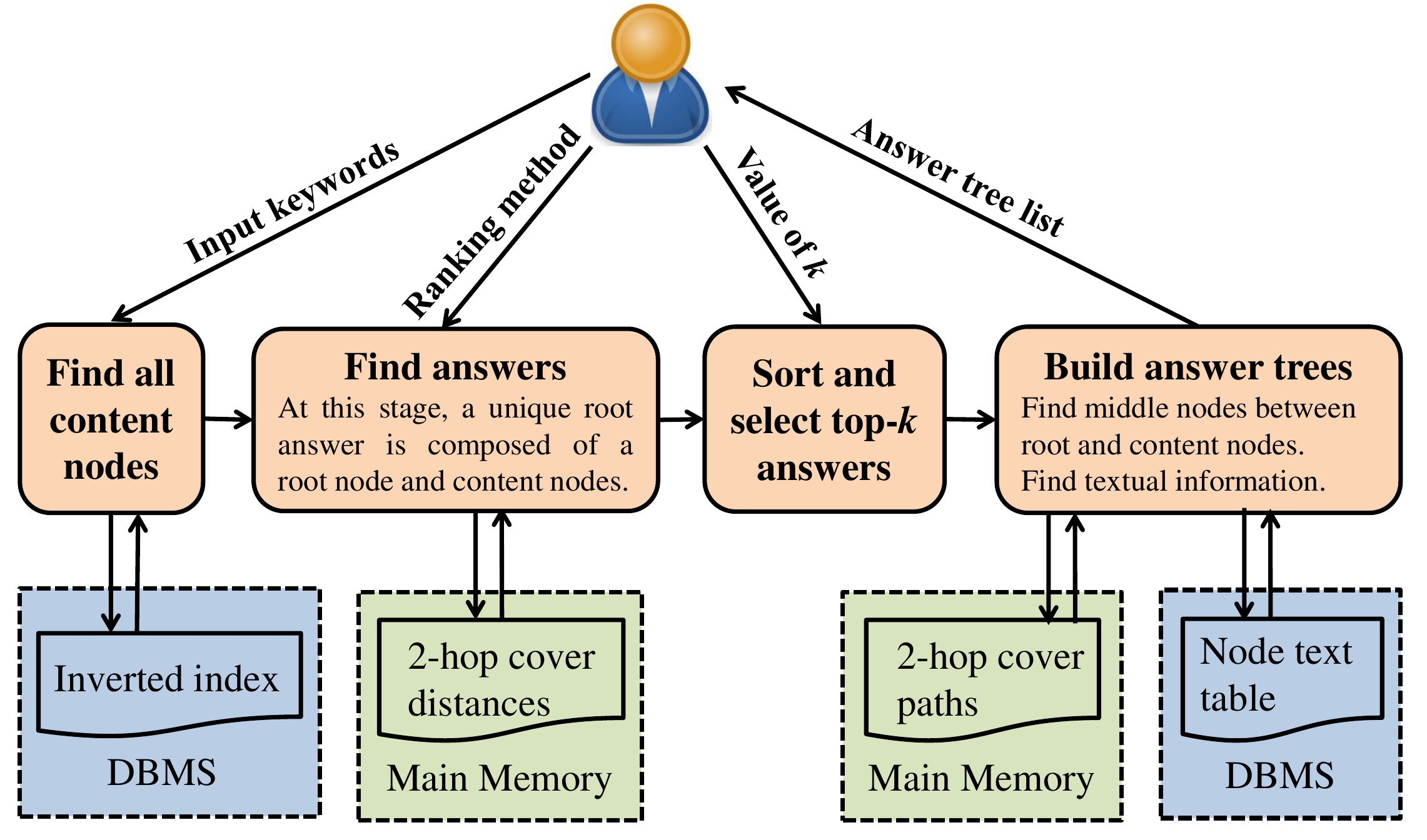}
\caption{System Architecture.} \label{fig:architecture}
\end{figure*}

Figure~\ref{fig:architecture} shows the system architecture of the system, consisting of four functional components and four data components.  For each query keyword, the system finds the IDs of the (content) nodes that contain the keyword using an inverted index.  These node IDs are stored in a map in which the key is the keyword and the value is a set of node IDs that contain the keyword.  This map is passed to the procedure for finding unique answer trees.  Using the index for finding the shortest distances between potential  root nodes and content nodes, the second functional component finds a set of unique root answers.  At this stage, each answer is composed of a root node and a set of content nodes.  This set is passed to the next step to sort and select top-$k$ answers.  For each answer in the top-$k$ list, the last functional component builds the answer tree by finding the middle nodes from the path index and the textual information from the node text table.  At the end, the list of top-$k$ answer trees are shown to the user.  Below, we describe the search algorithms, indexing methods and data storage in detail.

\subsection{Search and Ranking Algorithms}

Since minimizing the node importance and combined objectives is NP-hard, we propose two greedy algorithms to solve them.  The first algorithm is based on converting the input graph into a new graph with importance/weights only on the edges.  The new graph $G'$ has the same set of nodes as the original graph $G$, but the node importance in $G$ is moved onto the edges in $G'$.  Therefore, the \textit{inverted index} and \textit{node text table} of $G$ and $G'$ are the same.  Assume that the edge weight between nodes $n_i$ and $n_j$ in $G$ is denoted by $w(n_i,n_j)$.  In $G'$, the edge weight between nodes $n_i$ and $n_j$ is defined as $w'(n_i,n_j) = \lambda.(imp'(n_i)+imp'(n_j)) + 2.(1-\lambda).w(n_i,n_j)$. Note that when $\lambda$ is set to 1, this algorithm only minimizes the node importance objective.  

For finding answers, we adapt the unique root semantic approach \cite{He2007}.  We assume each node $n_r$ in the graph could be a potential root for an answer tree.  To build a tree around $n_r$, for each given keyword $k_i$, we assign the closest node $n_{k_i}$ in $G'$ that contains $k_i$ to a tree rooted at $n_r$.  The tree with the lowest sum of the new edge weights is the best answer.  Since we follow the unique root semantic, and we run this polynomial operation on every node (i.e., potential root) of the graph once, the total run time of the algorithm is also polynomial.

To further improve efficiency, we introduce the second algorithm, which does not require to build separate indexes for different values of $\lambda$ and therefore, it saves space.  Here, we build a new graph $G''$ with the same set of nodes as $G$.  In $G''$, the edge weight between nodes $n_i$ and $n_j$ is defined as $w''(n_i,n_j) = imp'(n_i)+imp'(n_j)$.  This is similar to $G'$ when $\lambda=1$.  From the query keywords, we choose the one with the smallest set of content nodes.  Call this keyword $k$ and its set of content nodes $S_k$.  Then, for each node $n_i \in S_k$ and for the remaining query keywords, we select the content node closest to $n_i$ in $G''$.  A tree rooted at $n_i$ is formed and its $C(T)$ is calculated.  Then, for each query keyword other than $k$, we replace the content node of that keyword with the second closest content node to $n_i$ in $G''$.  If this improves the value of $C(T)$, this operation is applied permanently.  This process is repeated for a limited number of times or until the replace operation does not decrease $C(T)$ by more than a pre-defined threshold $\delta$.  Among the trees rooted at the content nodes of $S_k$, the one with the lowest $C(T)$ is chosen as the best answer.  Unlike our first algorithm, for different values of $\lambda$, we do not need to build a separate index for the input graph.  On the other hand, we show in the experiments that the results of the first algorithm is closer to the optimal results obtained by the exhaustive search algorithm.

By conducting a user study, we found that ranking answers using our node importance and combined objectives returns more meaningful results than previous methods.  In particular, in Figure~\ref{fig:graph}, both of the proposed algorithms return answer (b) as the best answer (with $\lambda$ set to 0.25, 0.5, 0.75 or 1).  Additionally, the experimental results show that our greedy algorithms produce answers whose quality is close to that of the exact algorithm obtained by exhaustive search.

\subsection{Indexing and Data Storage}

For an input graph $G$, the following four data structures are built to facilitate the search process.

\begin{enumerate}[topsep=0pt,itemsep=-1ex,partopsep=1ex,parsep=1ex]
\item \textit{Inverted Index}: For each keyword in the input graph $G$, the IDs of the nodes that contain the keywords are stored in an inverted index.  This index is stored in a DBMS.
  
\item \textit{2-Hop Cover Distance Index}: This index is used to find the shortest distance between two nodes.  We use the 2-hop cover index and store it in memory.
  
\item \textit{2-Hop Cover Path Index}: After finding the set of content nodes, we need to find the shortest path from the root to each of the content nodes.  Therefore, as part of the 2-hop index, we store extra information to retrieve the shortest path.  This index is also stored in main memory.
  
\item \textit{Node Text Table}: The textual information contained in each node is stored in this table.  For example, in the IMDb dataset, the titles of the movies and the names of the actors are stored in this table.  This table is maintained in a DBMS.
  
\end{enumerate}

\begin{figure*}[t]
\centering
\includegraphics[width=10cm]{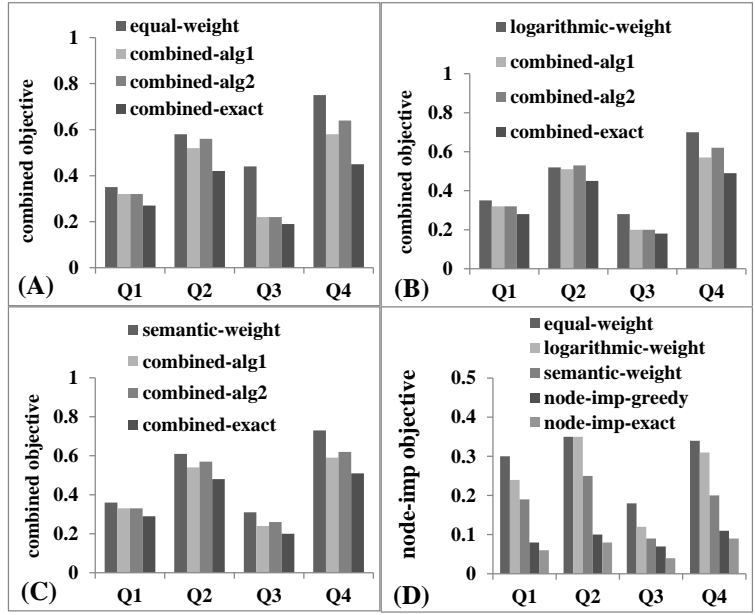}
\caption{Node importance and combined objectives of top-5 answers.} \label{fig:ResultsCombinedWeight}
\end{figure*}

\subsubsection{2-Hop Cover Index}
\label{Section:2-Hop-Cover-Index}

A distance query returns the distance between two nodes in graph $G$. For finding the answers to graph keyword search, we frequently need to find the distance between a potential root and a content node that contains at least one of the input keywords. One simple and inefficient way to do this is to compute the distance query on-the-fly. This can be done using a breadth first search (BFS) in unweighted graphs or Dijkstra's algorithm in weighted graphs. However, finding the distance between one pair of nodes might take more than one second for large graphs (number of nodes/edges $> 1M$) \cite{Akiba2013}. On the other hand, we have to find the distances between a number of pairs of nodes to find answers for each query. Since we need to have low latency due to real-time interactions between users and the application, this method is too slow to use as a building block of our application. The other simple and inefficient way is to compute distances between every pair of nodes in advance and store them in an index. In a graph with $N$ nodes, this method needs $O(N^2)$ space to store the index. Thus, for big graphs, we quickly run out of memory. Furthermore, computing this index takes a long time on large graphs. Therefore, even though we can answer distance queries instantly, this method is not feasible due to large space and long pre-processing time. Thus, from the performance point of view (both index size, pre-processing time and query time), we need to use a practical index that lies between these two extreme approaches.

In this demonstration, we adapt a method that is based on the concept of \textit{distance labeling} or \textit{2-hop cover} \cite{Cohen2002}. The general idea of 2-hop cover is as follows. For each node $n_i$ in graph $G$, a label $L(n_i)$ is computed and stored. $L(n_i)$ is a set of pairs $(n_j, d(n_j,n_i))$ where $n_j$ is a node in $G$ and $d(n_j,n_i)$ is the shortest distance between $n_j$ and $n_i$. The set of labels $\{L(n_i)\} | n_i \in N_G$ is called the 2-hop cover index ($N_G$ is the set of all nodes of $G$). To answer a distance query between nodes $n_s$ and $n_t$, $Dist(n_s, n_t, L)$ is computed as follows.

\begin{multline*}
  Dist(n_s, n_t, L) = \min \{d(n_l,n_s)+d(n_l,n_t) \ | \\ 
  (n_l,d(n_l,n_s))\in L(n_s),(n_l,d(n_l,n_t))\in L(n_t)\} 
\end{multline*}

$Dist(n_s, n_t, L)$ returns $\infty$ if $n_s$ and $n_t$ are not connected in $G$. $L$ is called a \textit{distance-aware 2-hop cover} of $G$ if $Dist(n_s, n_t, L)$ returns the shortest path between any pair of nodes $n_s$ and $n_t$ in $G$. For each node $n_i$, the label $L(n_i)$ is stored as a sorted list based on the order of the nodes. Then, $Dist(n_s, n_t, L)$ can be computed in $O(|L(n_s)| + |L(n_t)|)$ time using a a merge join algorithm \cite{Cohen2002}.

\begin{figure}
\centering
\includegraphics[width=8.5cm]{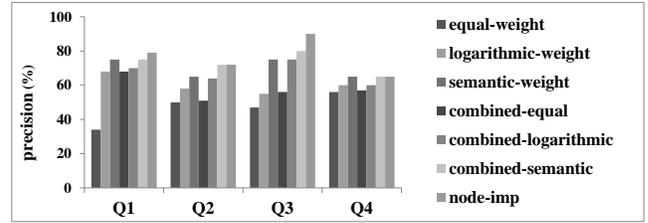}
\caption{Precision of top-5 answers.  $\lambda$ is set to 0.5.} \label{fig:ResultsPrecision}
\end{figure}

For producing the 2-hop cover list, we use a modification of the algorithm in Akiba et. al., \cite{Akiba2013}. The idea is as follows. First, the nodes are sorted based on their degree. This helps to decrease the size of the index and decrease the distance query time. Let $N = \{n_1, n_2, \dots , n_N\}$ be the set of nodes sorted by their degree (i.e., $n_1$ is the node with highset degree). The algorithm starts with an empty index $L_0$ in which $L_0(n_i) = \emptyset \ \forall \ n_i \in N_G)$. Then, we conduct pruned Dijkstra's algorithm\footnote{As we explain later, in this work, most of our input graphs are weighted graphs. Therefore, in order to build a 2-hop index, we need to use Dijkstra's algorithm. When using an unweighted graph, BFS can be used instead of Dijkstra to decrease the pre-processing time of builing the index.} from nodes in the order of $n_1$, $n_2$, $\dots$, $n_k$. After the $i$-th Dijkstra from a node $n_i$, the distances from $n_i$ to labels of the reached nodes are added to the $L_i$ ($L_i(n_l) = L_{i-1}(n_l) \cup \{(n_i, d(n_i,n_l))\}$). At the end of the $N$-th operation, $L_N$ is the final index. To further improve the performance of this index, we only keep the distances up to a threshold $D_{max}$ (by default, we set this threshold to ten times the average value of all the edge weights in the input graph). In other words, during the index creation, if $d(n_i,n_j) > D_{max}$, we assume $n_i$ and $n_j$ are disconnected and we stop further expanding the Dijkstra algorithm on $n_i$. Since we are only interested to find answers in which the content nodes are reasonably close to each other (and close to the root node), we do not need to calculate the distance between nodes which are far away from each other. $D_{max}$ determines the maximum distance between any content node and the root of the tree and it depends on the application. 
%

In order to find the shortest path between any pair of nodes, in the 2-hop cover index, we store a triple instead of a pair as a label. In other words, $L(n_i)$ is the set $(n_l, d(n_l,n_i), p(n_l,n_i))$, where $p(n_l,n_i) \in N_G$ is the parent of $n_l$ in the pruned Dijkstra search tree with root $n_l$. The shortest path between $n_l$ and $n_i$ can be restored by ascending the tree from $n_i$ to the parents.

\begin{figure}
\centering
\includegraphics[width=8.5cm]{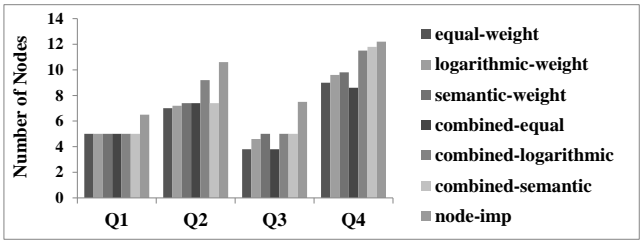}
\caption{Number of Nodes of top-5 answers.  $\lambda$ is set to 0.5.} \label{fig:ResultsNumberOfNodes}
\end{figure}

Our experiments show that the 2-hop cover index controlled by $D_{max}$ parameter scales well for large graph databases (Section 4).  The index can be stored in main memory or in an SSD.  This optimization and applying the second algorithm that is $\lambda$ independent turns to be an effective approach for building a scalable system without sacrificing the quality of answers.

\subsection{Edge and Node Weights}
\label{Section:Edge_Weight_Calculation}

In this work, we include three ways of assigning weights to edges: 1) equal weights, 2) logarithmic weights, in which the weight of an edge between two nodes $n_i$ and $n_j$ is $(\log_2(1 + deg_{n_i}) + \log_2(1 + deg_{n_j}))/2$, where $deg_{n}$ is the degree of node $n$, and 3), application-dependent semantic weights, in which the weight of an edge is related to the semantic relation between the two end nodes of the edge.  In the IMDb dataset, we use the weight of the associated foreign keys between two tuples in the relational database as the semantic weights between the associated nodes in the input graph, computed using the algorithm from \cite{Kargar2015a}. Node weights (node importance values) are application-dependent.  As we mentioned earlier, for IMDb, we calculate node importance using the algorithm from \cite{Kargar2015a}.

\section{Experiments}

We now test three algorithms on the IMDb dataset\footnote{\textit{http://imdbpy.sourceforge.net}}: 1) the algorithm from \cite{He2007} which only considers edge weights, 2) a version of our first greedy algorithm with $\lambda=1$, labeled node-imp, which only optimizes the node importance, and 3), our greedy algorithm, labeled combined, with $\lambda \in (0,1)$.  As we will show next, the results of the first algorithm that optimizes the combined objective is slightly closer to the exact method than the second one.  Therefore, if not explicitly stated, the combined algorithm refers to the first proposed greedy algorithm that optimizes the combined objective.  As we mentioned in Section \ref{Section:Edge_Weight_Calculation}, we use three ways to assign weights to the edges. Therefore, the results of first algorithm that minimizes the edges weights are presented for each of these ways.  They are called equal-weight, logarithmic-weight and semantic-weight in this section.  The IMDb graph has 1 million nodes and 3 million edges.  All the algorithms are implemented in Java.  The experiments are conducted on an Intel(R) Core(TM) i7 2.80 GHz computer with 8.0 GB of RAM.

As mentioned before, we calculate node importance using the algorithm from \cite{Kargar2015a}. Node importance values are normalized between 0 and 1. The average of the node importance values is 0.1. Since all the edge weights have the same value in IMDb, and in order to have a fair distribution of values for both node importance and edge weights, we set all the edge weights to 0.1.  We use the following queries over movie and actor names: Q1= ["Morgan Freeman" "Tim Robbins"], Q2=["Morgan Freeman" "Tim Robbins" "Keanu Reeves"], Q3=["The Matrix" "The Shawshank Redemption"] and Q4=["The Matrix" "The Shawshank Redemption" "Morgan Freeman" "Keanu Reeves"].  The IR scores of all of the content nodes are the same.  Thus, the results of the IR based method is similar to the results of the equal-weight method.  Therefore, to avoid repetition, the results of the IR based method are omitted here.

\begin{figure*}[t]
\centering
\includegraphics[width=18cm]{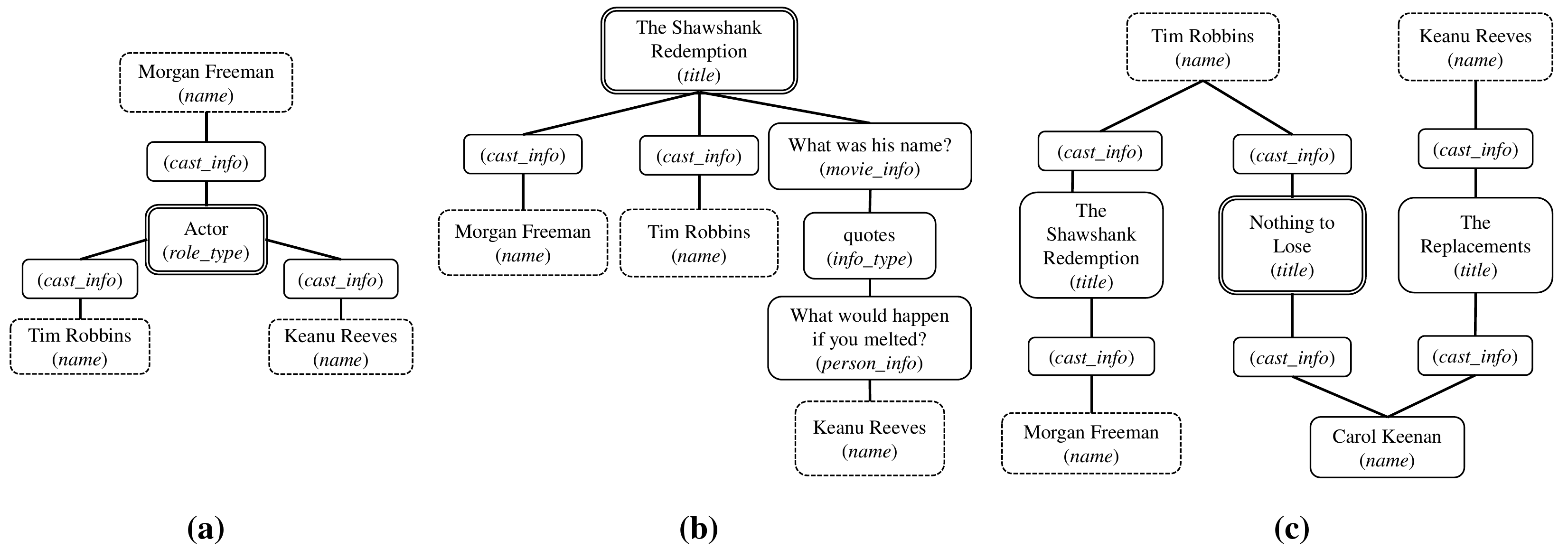}
\caption{Three top answers returned by different ranking strategies.  Content nodes are shown in dotted boxes.  Root nodes are shown in double-lined boxes.} \label{fig:QualitativeEvaluation}
\end{figure*}

The average value of the node importance and combined objectives for edge-based methods, combined (both algorithms and $\lambda$=$0.5$) and node-imp algorithms are shown in Figure \ref{fig:ResultsCombinedWeight}. We also present the results of the exact algorithm, obtained using exhaustive search. In practice it is too expensive to use it as users expect to receive answers in real time. Note that we have two gold standards, one that minimizes the node importance and one that minimizes the combined objective with $\lambda$ set to 0.5 with different strategies for assigning edge weights. The results suggest that our greedy algorithms produce answers whose quality is close to that of the exact algorithm. In addition, the results of the first proposed greedy algorithm for minimizing the combined objective is slightly closer to the exact results than the second greedy algorithm.  We also note that all three algorithms have similar runtime since they all use the same indexing method.

We compare the edge-based, node-imp and combined algorithms in terms of how relevant their answers are to the query. A common metric of relevance used in information retrieval is top-$k$ precision. It is the percentage of the returned answers in the top-$k$ answers that are relevant to the query. To evaluate the top-$k$ precision of the algorithms, we designed and performed a user study. We asked 8 users interested in movies (each with a different educational background) to judge the relevance of the top-5 answers using a score between zero and one.  Figure \ref{fig:ResultsPrecision} shows the top-5 precision of each tested algorithm on each query.  For the combined algorithm, we used different ways to assign edge weights and the value of $\lambda$ is set to 0.5.  Both of our methods, combined and node-imp, obtain better precision than equal-weight for all queries.  The logarithmic-weight and semantic-weight produce better results than equal-weight.  However, node-imp outperforms logarithmic-weight and semantic-weight and combined for all queries.  
%
%

The number of nodes for different algorithms is presented in Figure \ref{fig:ResultsNumberOfNodes}. The edge-equal has the smallest number of nodes. This is because all of the edges have the same weight in IMDb and final answers are presented in tree form. In trees, the number of edges are equal to the number of nodes minus one. Therefore, by minimizing the sum of the edge weights, the number of edges/nodes are implicitly minimized too. The node-imp has the highest number of nodes. The reason is that different nodes have different importance values. Therefore, we may have answers with many nodes and edges but small value of the sum of the node importance. By incorporating the edge weights into the node importance, we devalue the node importance values. Therefore, the number of nodes in the combined algorithm falls between equal-weight and node-imp.

We also design the following experiment to show that our system is able to save a significant amount of time to query databases even for expert users well-versed in SPARQL.  We measure the time to perform search both manually by four experts and automatically with our keyword search system.  To simulate real word environment with the manual search, experts were allowed to access the tools that they routinely use when they query graph databases.  When we measure the time for automatic search with our tool, we include the time for specifying the parameters, as well as preforming the actual search by our system and scrolling through the ranked answers to find the meaningful ones.  Whereas, when we measure the time for manual search, we include the time both for writing the SPARQL query and executing the prescribed query. We observed that our tool reduces around 10 times the time to search for desirable answers.  The automatic search was on average performed in 8 seconds, versus the manual search in around 90 seconds.  As expected, the manual search is prone to human errors, such as mistakes in syntax or misunderstanding the schema that introduces delays.

\begin{table}
\centering
\caption{Performance comparison of the 2-hop cover index for the IMDb graph with equal weights and different values of $D_{max}$ parameter. All of the edges have the weight of 1.0.}
\begin{tabular}{|l|l|l|} \hline
$D_{max}$ \ \ \ \ \ \ \ \ \ \ \ \ \ \ \ & Index Size \ \ \ \ \ \ \ \ \ \ & Query Time \ \ \ \ \ \ \ \ \ \  \\ \hline
3 & 399 MB& 0.8 $\mu s$  \\ \hline
5 & 429 MB& 0.9 $\mu s$  \\ \hline
7 & 435 MB& 0.9 $\mu s$  \\ \hline
9 & 450 MB& 0.9 $\mu s$  \\ \hline
$\infty$ & 450 MB& 0.9 $\mu s$  \\ 
\hline\end{tabular}
\label{tbl:indexEqualWeight}
\end{table}

\subsection{Qualitative Evaluation}

We compare the edge-weight, combined and node-imp algorithms via an example.  Using the IMDb dataset and assuming that edges have equal weights, the top answer returned by edge-weight ranking and IR-based ranking for the query ["Morgan Freeman" "Tim Robbins" "Keanu Reeves"] is shown in Figure~\ref{fig:QualitativeEvaluation} (a).  The top answer returned by our combined method (first greedy algorithm and $\lambda = 0.5$) for the same query is shown in Figure~\ref{fig:QualitativeEvaluation} (b).  Finally, the top answer returned by our node-importance and combined methods (first greedy algorithm and $\lambda = 0.75$) is shown in Figure~\ref{fig:QualitativeEvaluation} (c).  Answer (a) suggests that all three of these actors are connected merely because they are actors.  Answer (b) shows that Morgan Freeman co-starred in The Shawshank Redemption with Tim Robbins.  Furthermore, Keanu Reeves and The Shawshank Redemption both have an \emph{info\_type} of ``quote'' (a memorable quote from the movie) and that is how all the content nodes are connected together.  Answer (b) is more interesting than answer (a).  However, having the same \emph{info\_type} for an actor and a movie might not be interesting.  The last answer (answer (c)) reveals that Morgan Freeman starred in the same movie with Tim Robbins (The Shawshank Redemption).  Furthermore, it shows that Tim Robbins and Carol Keenan played in the same movie (Nothing to Lose).  Keanu Reeves is connected to the other two actors by playing in the same movie with Carol Keenan (The Replacements).  We argue that answer (c) is more interesting than answer (a) and (b) and reveals a deeper connection between the content nodes that may not have been known to a user who is unfamiliar with the dataset. 

\begin{table}
\centering
\caption{Performance comparison of the 2-hop cover index for the IMDb graph with combined method and different values of $D_{max}$ parameter. The edge weights are equal and the $\lambda$ parameter is set to 0.5. The maximum edge weight is 1.51 and the minimum edge weight is 1.01.}
\begin{tabular}{|l|l|l|} \hline
$D_{max}$ \ \ \ \ \ \ \ \ \ \ \ \ \ \ \ & Index Size \ \ \ \ \ \ \ \ \ \ & Query Time \ \ \ \ \ \ \ \ \ \  \\ \hline
3 & 522 MB& 0.9 $\mu s$  \\ \hline
5 & 750 MB& 1.1 $\mu s$  \\ \hline
7 & 1.75 GB& 1.6 $\mu s$  \\ \hline
9 & 1.78 GB& 1.6 $\mu s$  \\ \hline
$\infty$ & 1.78 GB& 1.6 $\mu s$  \\ 
\hline\end{tabular}
\label{tbl:indexCombined}
\end{table}

\subsection{Scalability of the Index}

In this section, we present the scalability studies regarding our 2-hop cover index.  As mentioned before, the IMDb graph has 1 million nodes and 3 million edges.  Tables \ref{tbl:indexEqualWeight}, \ref{tbl:indexCombined} and \ref{tbl:indexNodeImp} show the performance of the index for the IMDb graph with equal-weight, combined and node-imp methods respectively.  Query time denotes the average time for 1,000,000 random distance queries.  The $D_{max}$ value for node-imp is set around 0.1 because the distribution of the edge weights are different than the other two methods in node-imp.  The run rime suggests that the index answers distance queries almost instantly.  The results suggest that the $D_{max}$ does not have a significant effect on a graph when using equal-weight and combined methods.  However, it has a significant effect on a graph in which the node-imp method is used.  Table \ref{tbl:indexNodeImp} shows that the size of the index increases drastically when $D_{max}$ is changed from 0.07 to 0.1.  Figures \ref{fig:ScalabilityEqualEdge}, \ref{fig:ScalabilityCombined} and \ref{fig:ScalabilityNodeImp} show the precision of the answers with different values of $D_{max}$.  As the results suggest, using smaller values of $D_{max}$ does not sacrifice the precision.  For example, as we show in Figure \ref{fig:ScalabilityNodeImp}, the precision of the answers with $D_{max}$ equal to 0.07 is equal to the one with $D_{max}$ is set to 0.1.  Therefore, we can save space using smaller values of $D_{max}$ while not losing the precision of the answers.

\begin{table}
\centering
\caption{Performance comparison of the 2-hop cover index for the IMDb graph with node importance transferred to edges and different values of $D_{max}$ parameter. The maximum edge weight is 1.02 and the minimum edge weight is 0.01.}
\begin{tabular}{|l|l|l|} \hline
$D_{max}$ \ \ \ \ \ \ \ \ \ \ \ \ \ \ \ & Index Size \ \ \ \ \ \ \ \ \ \ & Query Time \ \ \ \ \ \ \ \ \ \  \\ \hline
0.03 & 440 MB& 0.9 $\mu s$  \\ \hline
0.05 & 980 MB& 1.1 $\mu s$  \\ \hline
0.07 & 1.35 GB& 1.5 $\mu s$  \\ \hline
0.10 & 7.82 GB& 2.1 $\mu s$  \\ \hline
$\infty$ & NA & NA  \\ 
\hline\end{tabular}
\label{tbl:indexNodeImp}
\end{table}

\begin{figure}
\centering
\includegraphics[width=7.0cm]{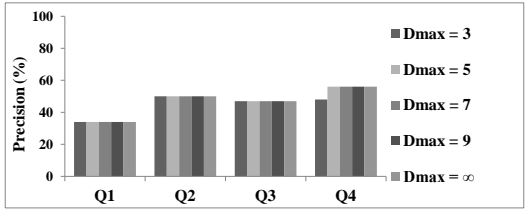}
\caption{Precision of top-5 answers using equal weight method for different values of $D_{max}$.} \label{fig:ScalabilityEqualEdge}
\end{figure}

\begin{figure}
\centering
\includegraphics[width=7.0cm]{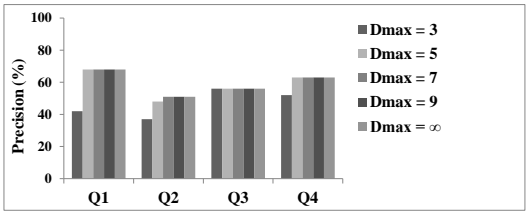}
\caption{Precision of top-5 answers using combined method for different values of $D_{max}$. The edge weights are equal and the $\lambda$ parameter is set to 0.5.} \label{fig:ScalabilityCombined}
\end{figure}

\begin{figure}
\centering
\includegraphics[width=7.0cm]{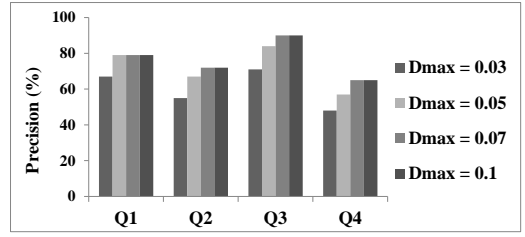}
\caption{Precision of top-5 answers using node-imp method for different values of $D_{max}$.} \label{fig:ScalabilityNodeImp}
\end{figure}

\section{Conclusions}

In this work, we introduce the problem of finding effective answers for keyword search over graphs in the existence of node importance. We define the problem of minimizing the node importance and proved that it is NP-hard. We also define a combined objective function that combines the values of the node importance and edge weights. For minimizing the node importance and combined objective functions, we experimentally verify that the result of our proposed greedy algorithms are close to the gold standards.

While the experiments verify the efficiency of our proposed greedy algorithms, in the future, we will work on establishing theoretical bounds for approximation algorithms. We plan to work on designing alternative strategies (such as an iterative approach) and more efficient indexes for finding answers. We will also apply our approach to different datasets with edge semantics. Given both node and edge semantics, we will examine which method will produce more meaningful results.

\bibliographystyle{abbrv}
\bibliography{biblioGraph}

\end{document}